\documentclass[journal]{IEEEtran}

\IEEEoverridecommandlockouts
\usepackage{amsthm}
\usepackage{mathrsfs}
\usepackage{comment}
\usepackage{siunitx}
\usepackage{placeins}
\usepackage{xcolor}
\usepackage{float}
\usepackage{bm}

\newtheorem{lemma}{Lemma}

\newtheorem{theorem}{Theorem}
\usepackage{graphicx} 
\usepackage{stfloats}
\usepackage{tikz}
\usepackage{amsmath, amssymb}

\usepackage{hyperref}
\usepackage[capitalize]{cleveref}
\Crefname{appendix}{Appendix}{Appendices}
\hypersetup{
    colorlinks=true,
    linkcolor=blue,    
    citecolor=blue,    
    urlcolor=blue      
}

\usetikzlibrary{shapes.geometric, arrows.meta, positioning, calc}

\IEEEoverridecommandlockouts 

\title{Two-Stage Quantum‑Classical Distribution Network Reconfiguration via Cycle-Edge Encoding\thanks{This manuscript has been authored by UT-Battelle, LLC, under Contract No. DE-AC0500OR22725 with the U.S. Department of Energy. The United States Government retains and the publisher, by accepting the article for publication, acknowledges that the United States Government retains a non-exclusive, paid-up, irrevocable, world-wide license to publish or reproduce the published form of this manuscript, or allow others to do so, for the United States Government purposes. The Department of Energy will provide public access to these results of federally sponsored research in accordance with the DOE Public Access Plan (http://energy.gov/downloads/doe-public-access-plan).}}

\author{
    \IEEEauthorblockN{
        Nowar Alashkar\IEEEauthorrefmark{1}\textsuperscript{\textsection}, 
        Cade Kennedy\IEEEauthorrefmark{1}\textsuperscript{\textsection}, 
        Phillip C. Lotshaw\IEEEauthorrefmark{2},
        Shaked Regev\IEEEauthorrefmark{2}, \\
    }
    \IEEEauthorblockN{
        Miguel Angel Lopez-Ruiz\IEEEauthorrefmark{3},
        Ananth Kaushik\IEEEauthorrefmark{3},
        Paul Smith\IEEEauthorrefmark{1},
        Claudio Girotto\IEEEauthorrefmark{3},
        Martin Roetteler\IEEEauthorrefmark{3}
    }
     \thanks{\IEEEauthorrefmark{1}Electric Power Board (EPB) of Chattanooga, Tennessee, USA 37402}
     \thanks{\IEEEauthorrefmark{2}Computational Science and Engineering Division, Oak Ridge National Laboratory, Oak Ridge, Tennessee, USA 37830}
     \thanks{\IEEEauthorrefmark{3}IonQ Inc., College Park, Maryland, USA 20740}
}

\makeatletter
\newcommand{\linebreakand}{%
  \end{@IEEEauthorhalign}
  \hfill\mbox{}\par
  \mbox{}\hfill\begin{@IEEEauthorhalign}
}
\makeatother

\begin{document}


\maketitle
\begingroup\renewcommand\thefootnote{\textsection}
\footnotetext{These authors contributed equally to this work}
\endgroup

\begin{abstract}
    Distribution network reconfiguration (DNR) is a combinatorial optimization problem that seeks a low-loss network topology subject to topological and electrical operating constraints. In this work, we propose a two-stage quantum-classical method for DNR. First, an iterative linear-ramp Quantum Alternating Operator Ansatz (LR-QAOA) searches topologically feasible subspaces constructed using a novel cycle-edge encoding. The encoding guarantees that every represented configuration is a spanning tree while allowing the size and coverage of each subspace to be adjusted to the available qubit budget. Second, the resulting sample distribution is used to guide a mixed-integer second-order-cone programming (MISOCP) solver. The method is evaluated on six distribution networks ranging from 33 to 417 buses through simulation and execution on trapped-ion quantum hardware. Across all six test systems, the MISOCP solver guided by hardware-derived distributions reaches an incumbent within 1\% of the best-known solution in less median solver time than the corresponding unguided solver.

\end{abstract}

\begin{IEEEkeywords}
Combinatorial optimization, distribution network reconfiguration, MISOCP, non‑variational QAOA, power distribution systems, quantum optimization, spanning-tree encoding

\end{IEEEkeywords}

\section{Introduction} 
\IEEEPARstart{T}{he} topology of electrical power distribution networks plays a critical role in determining system losses, as it directly influences voltage levels and current flows throughout the network. The distribution network reconfiguration (DNR) problem seeks to identify an optimal network configuration by strategically opening and closing sectionalizing and tie switches to alter power flows. A typical objective is to minimize active power losses while satisfying operational constraints, including radiality, voltage limits, and continuity of service \cite{Lotfi_2024}.  

A wide range of methods have been developed for DNR. Conventional heuristic methods, such as branch-exchange algorithms, exploit the structure of radial distribution networks to efficiently identify improving switching operations \cite{pereira2023distribution}. Metaheuristic methods, including genetic algorithms \cite{Tomoiaga_2013}, tabu search \cite{Fang_2016}, and simulated annealing \cite{5752495}, provide broader exploration of the configuration space but may require many power-flow (PF) evaluations and do not generally guarantee global optimality. Mathematical-programming approaches explicitly formulate the switching and operational constraints using mixed-integer linear \cite{11343754,10436610}, non-linear \cite{en13174440}, or second-order-cone models \cite{mahdavi2021reconfiguration,2023sunsocp,10.3389/fenrg.2023.1259445} and can provide optimality bounds for the formulated problem.

Recent work has explored hybrid quantum-classical approaches in which quantum-generated samples are used to initialize or guide classical optimization algorithms. For example, in \cite{cadavid2025scalingadvantagequantumenhancedmemetic}, configurations generated using a digitized counterdiabatic quantum optimization protocol are used to initialize a memetic tabu search for the low-autocorrelation binary-sequence problem. Similarly, in \cite{2025quantumenhancedoptimizationwarmstarts}, the authors solve maximum independent set and MaxCut problems using a state-of-the-art tabu search with initial states sampled from a Quantum Approximate Optimization Algorithm to obtain solution speedup. These approaches use quantum sampling to identify promising regions of the search space while relying on classical algorithms for subsequent refinement. Applying this strategy to DNR remains challenging because a direct switch-based encoding does not automatically enforce radiality and connectivity and may require more qubits than are available on current quantum hardware.

In this work, we propose a two-stage quantum-classical method for minimizing active power losses in DNR. The first stage uses an iterative linear-ramp Quantum Alternating Operator Ansatz (LR-QAOA) \cite{montanez2025toward} to search a sequence of topologically feasible subspaces constructed using a novel cycle-edge encoding. The encoding restricts every represented configuration to a spanning tree, thereby satisfying radiality and connectivity, while allowing each subspace to be sized according to the available qubit budget. This creates a direct tradeoff between subspace coverage and quantum hardware resources. We establish that every assignment permitted by a valid cycle-edge partition produces a unique spanning tree and that a finite collection of valid partitions can represent the complete spanning-tree space. Across iterations, the reference configuration is updated as improving configurations are identified, allowing the method to explore different regions of the topologically feasible space. In the second stage, information extracted from the final sample distribution is used to guide a mixed-integer second-order-cone programming (MISOCP) solver. Specifically, the configuration with the lowest PF-evaluated line loss found is supplied as a mixed-integer programming (MIP) start, providing an initial candidate solution for the solver. Additionally, aggregate information from the distribution is used to construct variable hints, which indicate preferred values for the binary line-status variables.

The proposed method is evaluated on six distribution-network systems: 33-bus \cite{baran1989reconfiguration}, 69-bus \cite{baran1989capacitor}, 85-bus \cite{su2003reconfiguration}, 118-bus \cite{zhang2007tabu}, 136-bus \cite{guimaraes2005tabu}, and 417-bus \cite{harsh2023heuristic} using LR-QAOA samples obtained from both ideal state-vector simulation and trapped-ion quantum hardware. The iterative sampling procedure identifies improved reference configurations while maintaining a fixed 29-qubit budget across network sizes. MIP starts and variable hints derived from the final hardware distributions reduce the median solver time to reach an incumbent within 1\% of the best-known solution across all six systems, with the largest reduction observed for the 417-bus network containing 473 switchable lines. Although quantum-stage overhead prevents an end-to-end runtime advantage on the tested systems, the total runtime increases more gradually across the benchmark set than the unguided solver runtime. The narrowing relative runtime gap suggests a potential crossover on larger instances, motivating further investigation beyond the present benchmark range.

The remainder of this paper is organized as follows. \cref{sec:methods} presents the DNR formulation, cycle-edge encoding, LR-QAOA procedure, classical evaluation and subspace-construction methods, and the iterative quantum-classical workflow. \cref{sec:results} describes the experimental setup and evaluates the iterative LR-QAOA procedure and distribution-guided MISOCP solver. Finally, \cref{sec:conclusions} summarizes the principal findings and discusses directions for future research.

\section{Methods}\label{sec:methods}
We adapt the DNR problem formulation from the mixed-integer second-order-cone programming (MISOCP) model of Mahdavi et al. \cite{mahdavi2021reconfiguration}. The model minimizes active power losses, 
\begin{equation}
    \min P_{\text{loss}} = \sum_{ij\in \Omega^\ell} R_{ij}|I_{ij}|^2,
\end{equation}
by optimizing the status of the switchable lines and associated electrical variables, where $\Omega^\ell$ is the set of distribution lines, $R_{ij}$ is the resistance of line $ij$, and $|I_{ij}|^2$ is the squared magnitude of the corresponding current. The model is subject to nodal active and reactive power balance constraints, a second-order-cone relaxation of the non-linear current-power relation, voltage and line-current limits, switch-dependent power-flow constraints, and radiality and connectivity constraints.

The proposed method solves this problem through a two-stage quantum-classical process. First, an iterative LR-QAOA procedure searches a sequence of topologically feasible subspaces to identify promising network configurations. Next, the lowest-loss evaluated configuration in the final distribution is used as a MIP start for the line-status variables, while aggregate information from that distribution provides variable hints for the MISOCP solve using Gurobi \cite{gurobi_optimizer}. 

\subsection{Cycle-Edge Encoding} \label{sec:cycle-edge}
For a distribution network with $n$ edges, the full switch-configuration space contains $2^n$ possible configurations. However, only a small subset of these configurations satisfy the topological constraints (i.e., radiality and connectivity) \cite{han2026powergrid}. We denote this feasible subset by $\mathcal{S}_T$.

Without constraints restricting the search to $\mathcal{S}_T$, direct sampling of the full binary switch space may devote many samples to configurations that violate the topological constraints. Formulations that enforce these constraints using auxiliary binary variables require additional qubits, further straining the resources available on current quantum hardware \cite{topo-qubo-formulation}. Therefore, we propose a cycle-edge encoding that exploits the cycle structure of the network to construct a smaller search space containing only topologically feasible configurations. Importantly, it allows us to select a structured subspace of feasible configurations, providing a controllable tradeoff between solution-space coverage and quantum hardware resources.

Let $G=(V,E)$ be a connected graph, where $V$ and $E$ are its respective sets of vertices and edges. Let $k$ denote its cyclomatic number and 
\begin{equation}
\mathcal{C}=\{C_1,C_2,\ldots,C_k\}
\end{equation}
denote a simple-cycle basis of $G$. Since two or more basis cycles may share edges, the cycle-edge sets are generally not disjoint. To eliminate overlapping edge assignments and treat the cycle sets independently, every edge belonging to multiple basis cycles is assigned to exactly one of those cycles. The partition of nonempty, disjoint edge sets is defined as
\begin{equation}
\mathcal{C}' = \{C_1',C_2',\ldots,C_k'\},
\end{equation}
subject to 
\begin{align}
C_i' &\subseteq C_i, \qquad \text{for } i=1,\ldots,k, \label{constr:subset}\\
C_i' &\cap C_j' = \emptyset, \qquad \forall i\neq j, \label{constr:intersection}\\ 
C_i' &= C_i \setminus \bigcup_{{j<i}} C_j, \qquad \text{for }i=1,\ldots,k. \label{constr:single-assignment}
\end{align} 
Different indexings of the basis cycles generally yield different partitions and therefore different subsets of spanning trees. This is due to the fact that different partitions allow for different edges to be open simultaneously. For example, \cref{fig:33bus-cycles} illustrates the overlap among the cycles in a chosen cycle-basis for the 33-bus network. The path comprising edges $(3,4)$, $(4,5)$, and $(5,6)$ is shared by cycles $C_1$, $C_2$, and $C_3$. When constructing a valid partition, this shared path is assigned to one of the corresponding partition blocks. Assigning this shared path to $C'_1$ makes its edges alternative removal choices within that block. Since exactly one edge is opened per block, opening any edge along this path excludes simultaneously opening another edge in $C'_1$, but permits it to be opened together with one edge from each of the other blocks. Choosing a different indexing can assign the shared path to a different block, changing the permitted combinations and, generally, the subset of spanning trees represented by the encoding.

\begin{figure}[htbp] 
\centering

\definecolor{cycle1}{HTML}{42BAFF}
\definecolor{cycle2}{HTML}{B494FF}
\definecolor{cycle3}{HTML}{FF0000}
\definecolor{cycle4}{HTML}{BAED91}
\definecolor{cycle5}{HTML}{E69F00}

\resizebox{\columnwidth}{!}{%
\begin{tikzpicture}[
    x=0.48cm, y=0.48cm,
    track/.style={
        line width=0.65pt,
        line cap=round,
        line join=round
    },
    tie/.style={
        track,
        dash pattern=on 2.4pt off 1.6pt
    },
    bus/.style={
        circle, fill=black,
        inner sep=0pt, minimum size=2.5pt
    },
    buslabel/.style={
        font=\fontsize{6}{7}\selectfont,
        inner sep=0.5pt, above=3pt
    },
    cyclelabel/.style={
        font=\fontsize{9}{10}\selectfont,
        inner sep=1pt
    }
]

\foreach \busnum in {1,...,18} \coordinate (b\busnum) at ({\busnum-1},0);

\foreach \busnum in {19,...,22} \coordinate (b\busnum) at ({\busnum-18},3);

\foreach \busnum in {23,...,25} \coordinate (b\busnum) at ({\busnum-21},-4.2);

\foreach \busnum in {26,...,33} \coordinate (b\busnum) at ({\busnum-21},-2.6);

\def\sharedbranch#1#2#3{%
    \begingroup
    \edef\startbus{#1}
    \edef\endbus{#2}

    \foreach \cycleID [count=\trackIndex] in {#3}{\xdef\trackCount{\trackIndex}}

    \coordinate (waveMid) at ($(\startbus)!0.5!(\endbus)$);

    \foreach \cycleID [count=\trackIndex] in {#3}{%
        \pgfmathsetmacro{\waveAmplitude}{2.4*(\trackIndex-(\trackCount+1)/2)}
        \pgfmathsetmacro{\negativeAmplitude}{-\waveAmplitude}

        \coordinate (waveA) at ($(\startbus)!0.1667!(\endbus)$);
        \coordinate (waveB) at ($(\startbus)!0.3333!(\endbus)$);
        \coordinate (waveC) at ($(\startbus)!0.6667!(\endbus)$);
        \coordinate (waveD) at ($(\startbus)!0.8333!(\endbus)$);

        \coordinate (controlA) at ($(waveA)!\waveAmplitude pt!90:(\endbus)$);
        \coordinate (controlB) at ($(waveB)!\waveAmplitude pt!90:(\endbus)$);

        \coordinate (controlC) at ($(waveC)!\negativeAmplitude pt!90:(\endbus)$);
        \coordinate (controlD) at ($(waveD)!\negativeAmplitude pt!90:(\endbus)$);

\path let \p1 = ($(\endbus)-(\startbus)$), \n1 = {veclen(\x1,\y1)} in \pgfextra{\xdef\branchLength{\n1}};

\ifdim\branchLength<20pt
    \coordinate (controlShort) at ($(waveC)!\waveAmplitude pt!90:(\endbus)$);

    \draw[track,draw=cycle\cycleID](\startbus) .. controls (controlB) and (controlShort) .. (\endbus);
\else
    \draw[track,draw=cycle\cycleID] (\startbus) .. controls (controlA) and (controlB) ..
        (waveMid)
        .. controls (controlC) and (controlD) ..
        (\endbus);
\fi
    }
    \endgroup
}

\draw[black,line width=0.5pt] (b1)--(b2);

\sharedbranch{b2}{b3}{1,3}

\foreach \edgeStart in {3,4,5}{
    \pgfmathtruncatemacro{\edgeEnd}{\edgeStart+1}
    \sharedbranch{b\edgeStart}{b\edgeEnd}{1,3,5}
}

\foreach \edgeStart in {6,7}{
    \pgfmathtruncatemacro{\edgeEnd}{\edgeStart+1}
    \sharedbranch{b\edgeStart}{b\edgeEnd}{1,3,4}
}

\sharedbranch{b8}{b9}{3,4}

\foreach \edgeStart in {9,10,11}{
    \pgfmathtruncatemacro{\edgeEnd}{\edgeStart+1}
    \sharedbranch{b\edgeStart}{b\edgeEnd}{2,3,4}
}

\foreach \edgeStart in {12,13,14}{
    \pgfmathtruncatemacro{\edgeEnd}{\edgeStart+1}
    \sharedbranch{b\edgeStart}{b\edgeEnd}{2,4}
}

\foreach \edgeStart in {15,16,17}{
    \pgfmathtruncatemacro{\edgeEnd}{\edgeStart+1}
    \sharedbranch{b\edgeStart}{b\edgeEnd}{4}
}

\sharedbranch{b2}{b19}{1,3}
\sharedbranch{b19}{b20}{1,3}
\sharedbranch{b20}{b21}{1,3}
\sharedbranch{b21}{b22}{3}

\sharedbranch{b3}{b23}{5}
\sharedbranch{b23}{b24}{5}
\sharedbranch{b24}{b25}{5}

\sharedbranch{b6}{b26}{4,5}

\foreach \edgeStart in {26,27,28}{
    \pgfmathtruncatemacro{\edgeEnd}{\edgeStart+1}
    \sharedbranch{b\edgeStart}{b\edgeEnd}{4,5}
}

\foreach \edgeStart in {29,30,31,32}{
    \pgfmathtruncatemacro{\edgeEnd}{\edgeStart+1}
    \sharedbranch{b\edgeStart}{b\edgeEnd}{4}
}

\draw[tie,cycle1]
    (b21)--(3,1.4)--(7,1.4)--(b8);

\draw[tie,cycle2]
    (b9)--(8,-1.45)--(14,-1.45)--(b15);

\draw[tie,cycle3]
    (b22)--(11,3)--(b12);

\draw[tie,cycle4]
    (b18)--(17,-2.6)--(b33);

\draw[tie,cycle5]
    (b25)--(8,-4.2)--(b29);

\foreach \busnum in {1,...,33}
    \node[bus] at (b\busnum) {};

\foreach \busnum in {1,...,22,24,25,27,28,...,33}
    \node[buslabel] at (b\busnum) {\busnum};

\node[
    font=\fontsize{6}{7}\selectfont,
    anchor=east,inner sep=3pt
] at (b23) {23};

\node[
    font=\fontsize{6}{7}\selectfont,
    anchor=east,inner sep=4pt
] at (b26) {26};

\node[cyclelabel,text=cycle1] at (2,1.5)   {$C_1$};
\node[cyclelabel,text=cycle2] at (11,-0.8)   {$C_5$};
\node[cyclelabel,text=cycle3] at (8.9,1.7)   {$C_2$};
\node[cyclelabel,text=cycle4] at (15.4,-1.3) {$C_4$};
\node[cyclelabel,text=cycle5] at (3.5,-2.1)  {$C_3$};

\end{tikzpicture}%
}

\caption{Illustration of a fundamental-cycle basis for the 33-bus network. Solid connections are edges of the reference spanning tree, while dashed connections are open tie lines whose addition forms the cycles $C_1,\ldots,C_5$. Multiple colored curves along the same edge indicate that it belongs to more than one cycle; they do not represent additional physical lines.}
\label{fig:33bus-cycles}
\end{figure}

The resulting partition defines the proposed encoding. Let $x_{ij}\in \{0,1\}$ denote the binary variable associated with the $j$th edge of partition block $C'_i$, where $x_{ij}=1$ indicates that the corresponding edge is selected to be open. The encoding is represented as 
\begin{equation}
\mathbf{x} =
\underbrace{x_{11}\;x_{12}\;\cdots\;x_{1|C_1'|}}_{C_1'}
\;
\cdots
\;
\underbrace{x_{k1}\;x_{k2}\;\cdots\;x_{k|C_k'|}}_{C_k'},
\end{equation}
subject to 
\begin{equation} \label{const:partition}
    \sum_{j=1}^{|C'_i|} x_{ij} = 1, \quad \text{for }i=1,\ldots, k.
\end{equation}
Thus, exactly one edge is selected to be open from each of the $k$ cycle-edge partitions.

\begin{theorem}\label{thm:spanning-tree}
    For any valid cycle-edge partition $\mathcal{C}'$, every assignment satisfying the cycle-edge encoding constraint in \cref{const:partition} produces a valid spanning-tree configuration. 
\end{theorem}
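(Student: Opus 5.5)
The plan is to reduce the statement to a linear-algebra fact over $\mathrm{GF}(2)$ about the cycle space of $G$. I interpret the simple-cycle basis $\mathcal{C}$ as a basis of the cycle space $\mathcal{Z}(G)\subseteq \mathrm{GF}(2)^{E}$, which has dimension $k=|E|-|V|+1$. An assignment satisfying \cref{const:partition} opens exactly one edge $e_i\in C_i'$ for each $i=1,\ldots,k$. The blocks are nonempty and disjoint by \eqref{constr:intersection}, so the open set $F=\{e_1,\ldots,e_k\}$ has exactly $k$ distinct edges. The closed subgraph $T=(V,E\setminus F)$ therefore has $|E|-k=|V|-1$ edges. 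A graph on $|V|$ vertices with $|V|-1$ edges is a spanning tree if and only if it is acyclic, so it suffices to show that $T$ contains no cycle.

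Next I record the triangular incidence structure forced by the partition rule. Define the $k\times k$ matrix $A$ over $\mathrm{GF}(2)$ by $A_{ij}=1$ if $e_j\in C_i$ and $A_{ij}=0$ otherwise. By \eqref{constr:subset}, $e_i\in C_i'\subseteq C_i$, so $A_{ii}=1$. By \eqref{constr:single-assignment}, $e_j\in C_j'=C_j\setminus\bigcup_{i<j}C_i$, so $e_j\notin C_i$ for every $i<j$, which gives $A_{ij}=0$ whenever $i<j$. Hence $A$ is lower triangular with unit diagonal and is invertible over $\mathrm{GF}(2)$.

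Finally I argue that $T$ is acyclic. Suppose $T$ contains a cycle $D$. Its edge-indicator vector is a nonzero element of $\mathcal{Z}(G)$, so $D=\sum_i c_i C_i$ (symmetric difference) for some nonzero $c\in\mathrm{GF}(2)^k$. Because $D$ avoids $F$, its restriction to each coordinate $e_j$ vanishes. That restriction equals $\sum_i c_i A_{ij}$, so $c^{\top}A=0$. Invertibility of $A$ then forces $c=0$, a contradiction. Hence $T$ is acyclic and, by the edge count, a spanning tree, i.e.\ radial and connected.

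The main obstacle I expect is conceptual rather than computational: the argument needs $\mathcal{C}$ to be a genuine basis of the cycle space, not merely some collection of $k$ simple cycles. I would state this explicitly as the meaning of ``simple-cycle basis.'' For the fundamental-cycle basis in \cref{fig:33bus-cycles}, it follows from the standard fact that fundamental cycles of a spanning tree form a basis.
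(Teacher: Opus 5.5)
Your proof is correct, but it takes a different route from the paper's. The paper proves connectivity rather than acyclicity. It deletes the selected edges in the order $r_k, r_{k-1}, \ldots, r_1$. By the single-assignment rule, no previously deleted $r_j$ with $j>i$ lies in $C_i$, so $C_i$ is still an intact cycle through $r_i$ when $r_i$ is removed, and connectivity survives every step. The count of $|V|-1$ remaining edges then gives a spanning tree. That route is purely combinatorial and uses only that each $C_i$ is a cycle and that $k=|E|-|V|+1$; it never needs $\mathcal{C}$ to span the cycle space.

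You use the same triangularity fact ($e_j\notin C_i$ for $i<j$), but you package it as the unit lower-triangular matrix $A$ and rule out cycles in $E\setminus F$ through the $\mathrm{GF}(2)$ cycle space. What this buys is the structural criterion behind the result. Given a cycle basis, a $k$-edge set $F$ has $E\setminus F$ a spanning tree exactly when the basis incidence matrix restricted to the columns $F$ is invertible, i.e., when $F$ is a cobase of the cycle matroid. This criterion covers every $k$-edge open set, not only those produced by the ordered partition.

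Finally, the obstacle you flag resolves itself. $A$ is a $k\times k$ submatrix of the matrix whose rows are the indicator vectors of $C_1,\ldots,C_k$, so its invertibility already makes these cycles linearly independent. Since each $C_i$ lies in $\mathcal{Z}(G)$ and $\dim\mathcal{Z}(G)=k$, they form a basis automatically. You therefore do not need to assume the basis property separately; nonempty blocks in a valid partition already imply it.
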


\begin{proof}[Proof of \cref{thm:spanning-tree}]
The feasible assignment selects one edge $r_i\in C_i'$ from each disjoint cycle partition for removal. Since these selected edges are fixed, removing them in any order produces the same resulting graph. To establish connectivity, consider their removal starting with $C_k'$ and proceeding backward to $C_1'$. By \cref{constr:single-assignment}, $r_j$ belongs to no cycle $C_i$ for $i<j$. Thus, the preceding removals leave every edge of $C_i$ in the current graph. Consequently, each selected edge remains a valid cycle-breaking removal, and connectivity is preserved throughout the sequence. After all $k$ removals, the graph remains connected and contains $(|V|-1+k)-k=|V|-1$ edges. Therefore, it is a spanning tree.
\end{proof}
The same argument applies component-wise to a graph with $P$ connected components. In this case, $k=|E|-|V|+P$ cycle-edge removals preserve the components and leave $|V|-P$ edges, yielding a spanning forest.

Let $\mathcal{W}(\mathcal{C}')\subseteq \mathcal{S}_T$ denote the set of spanning-tree configurations represented by a valid partition $\mathcal{C}'$. Because the partitions are disjoint, each encoded edge belongs to exactly one $C'_i$. Therefore, every combination containing one selected edge from each partition defines a unique spanning tree, and the size of the encoded subspace is 
\begin{equation}
    |\mathcal{W}(\mathcal{C}')|=\prod_{i=1}^k |C'_i|,
\end{equation}
subject to the constraint in \cref{const:partition}. The number of binary variables $x_{ij}$ required to encode all $k$ blocks of $\mathcal{C}'$ is $\sum_{i=1}^k |C'_i|$.

A single partition generally represents only a subset of $\mathcal{S}_T$. However, the encoding is collectively capable of representing the complete spanning-tree space through different choices of cycle bases and valid partitions. 

\begin{theorem}
\label{thm:reachability}
Let $\mathcal{T}(G)=\{T_1,T_2,\ldots,T_t\}$ denote the set of all spanning trees of a connected
graph $G$. There exists a finite collection of valid partitions
\begin{align}
    &\mathcal{P} = \{\mathcal{C}'_1,\mathcal{C}'_2,\ldots,\mathcal{C}'_m\}\\
    \text{s.t.}\; &
    \mathcal{W}(\mathcal{P}) := \bigcup_{i=1}^{m} \mathcal{W}(\mathcal{C}'_i) = \mathcal{T}(G).
\end{align}
\end{theorem}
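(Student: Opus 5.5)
The plan is to show that each spanning tree $T_s$ is represented by at least one valid partition, and then take $\mathcal{P}$ to be the collection of these partitions. Finiteness is then automatic: $G$ has finitely many spanning trees, so $t<\infty$ and $m\le t$. Since $\mathcal{W}(\mathcal{C}')\subseteq \mathcal{S}_T$ for every valid partition by \cref{thm:spanning-tree}, and $\mathcal{S}_T=\mathcal{T}(G)$, the inclusion $\mathcal{W}(\mathcal{P})\subseteq\mathcal{T}(G)$ holds. So only the reverse inclusion needs work.

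For a fixed tree $T$, I would use the fundamental-cycle basis of $T$ itself. Let $e_1,\ldots,e_k$ be the cotree edges $E\setminus T$; there are exactly $k=|E|-|V|+1$ of them. Let $C_i$ be the unique cycle in $T+e_i$. This is a standard simple-cycle basis, which is exactly the construction shown in \cref{fig:33bus-cycles}. Each $C_i$ contains exactly one cotree edge, namely $e_i$, because every other edge of $C_i$ lies in $T$. Consequently $e_i\notin C_j$ for $j\neq i$. Under any indexing, the rule \cref{constr:single-assignment} gives $e_i\in C_i'=C_i\setminus\bigcup_{j<i}C_j$. This also shows $C_i'\neq\emptyset$, and \cref{constr:subset,constr:intersection} hold by construction, so $\mathcal{C}'_T$ is a valid partition.

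Now choose the assignment $x$ that opens $e_i$ in block $C_i'$ for each $i$. This satisfies \cref{const:partition}, and removing $\{e_1,\ldots,e_k\}$ from $G$ leaves exactly $T$. Hence $T\in\mathcal{W}(\mathcal{C}'_T)$. Taking $\mathcal{P}=\{\mathcal{C}'_{T_1},\ldots,\mathcal{C}'_{T_t}\}$, with duplicates removed, gives $\mathcal{W}(\mathcal{P})=\mathcal{T}(G)$.

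The only step that needs care is the nonemptiness and validity of the partition: the argument must hold whatever indexing is used for the cycles. The observation that each fundamental cycle owns a private cotree edge resolves this, since that edge can never be absorbed by an earlier block.
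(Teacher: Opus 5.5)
Your proposal is correct and follows the paper's argument. Lemma~\ref{lem:tree-contained} likewise takes the fundamental-cycle basis of each tree $T$, notes that each cotree edge lies only in its own fundamental cycle and therefore always lands in $C_i'$, selects those edges to recover $T$, and then takes $\mathcal{P}$ to be the collection of these partitions over all spanning trees. You additionally spell out three points the paper leaves implicit: the inclusion $\mathcal{W}(\mathcal{P})\subseteq\mathcal{T}(G)$ via Theorem~\ref{thm:spanning-tree}, the nonemptiness of each block, and finiteness.
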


The proof of \cref{thm:reachability} is provided in \cref{app:spanning-tree-reachability}. Together, \cref{thm:spanning-tree,thm:reachability} establish that the encoding is topologically valid and collectively capable of representing the complete topologically feasible space. In practice, the proposed method does not attempt to identify all $\mathcal{C}'_i\in \mathcal{P}$ required to cover $\mathcal{S}_T$. Instead, it iteratively constructs and searches selected subspaces, each restricted according to the available qubit budget, as discussed in \cref{sec:subspace}. This allows the method to explore different regions of $\mathcal{S}_T$ across iterations while ensuring that every encoded configuration satisfies the topological constraints.

\begin{figure*}[htbp] 
\centering
\begin{tikzpicture}[
    wblock/.style={
        draw,
        fill=white,
        minimum width=1.15cm,
        minimum height=1.1cm,
        font=\small
    },
    cblock/.style={
        draw,
        fill=white,
        text width=1.55cm,
        minimum height=4.35cm,
        align=center,
        font=\small
    },
    mblock/.style={
        draw,
        fill=white,
        text width=1.65cm,
        minimum height=1.1cm,
        align=center,
        font=\small
    },
    measure/.style={
        draw,
        fill=white,
        minimum width=1.05cm,
        minimum height=1cm,
        font=\small
    },
    dots/.style={
        fill=white,
        inner xsep=5pt,
        inner ysep=1pt
    },
    statecut/.style={
        red,
        dashed,
        thick
    },
    statelabel/.style={
        black,
        font=\small
    }
]

    \def\xstart{0}
    \def\xw{1}
    \def\xinitialstate{2.1}
    \def\xc{3.5}
    \def\xm{5.55}
    \def\xdots{7.5}
    \def\xcfinal{9.45}
    \def\xmfinal{11.5}
    \def\xfinalstate{13.0}
    \def\xmeasure{14.1}
    \def\xend{15}
    
    \def\ytop{0}
    \def\ymiddle{-1.5}
    \def\ybottom{-3.25}
    
    \pgfmathsetmacro{\ycenter}{(\ytop+\ybottom)/2}
    \pgfmathsetmacro{\ydots}{(\ymiddle+\ybottom)/2+0.1}
    
    \draw[thick] (\xstart+0.2,\ytop) -- (\xend,\ytop);
    \draw[thick] (\xstart+0.2,\ymiddle) -- (\xend,\ymiddle);
    \draw[thick] (\xstart+0.2,\ybottom) -- (\xend,\ybottom);
    
    \node[left] at (\xstart,\ytop) {$C'_1$};
    \node[left] at (\xstart,\ymiddle) {$C'_2$};
    \node at (-0.35,\ydots) {$\vdots$};
    \node[left] at (\xstart,\ybottom) {$C'_k$};
    
    \node at (\xstart+0.1,\ytop) {$\Bigg\{$};
    \node at (\xstart+0.1,\ymiddle) {$\Bigg\{$};
    \node at (\xstart+0.1,\ybottom) {$\Bigg\{$};
    
    \node[wblock] at (\xw,\ytop) {$W_1$};
    
    \node[wblock] at (\xw,\ymiddle) {$W_2$};
    
    \node at (\xw,\ydots) {$\vdots$};
    
    \node[wblock] at (\xw,\ybottom) {$W_k$};
    
    \draw[statecut] (\xinitialstate,\ybottom-0.6) -- (\xinitialstate,\ytop+0.6);
    \node[statelabel, below] at (\xinitialstate,\ybottom-0.6) {$|\Psi_0\rangle$};
    
    \node[cblock] at (\xc,\ycenter) {$U_C(\gamma_0)$};
    
    \node[mblock] at (\xm,\ytop) {$U_{M_1}(\beta_0)$};
    \node[mblock] at (\xm,\ymiddle) {$U_{M_2}(\beta_0)$};
    \node at (\xm,\ydots) {$\vdots$};
    \node[mblock] at (\xm,\ybottom) {$U_{M_k}(\beta_0)$};
    
    \node[dots] at (\xdots,\ytop) {$\cdots$};
    \node[dots] at (\xdots,\ymiddle) {$\cdots$};
    \node[dots] at (\xdots,\ybottom) {$\cdots$};
    \node[font=\scriptsize, above=5pt] at (\xdots,\ytop) {$p$ layers};
    
    \node[cblock] at (\xcfinal,\ycenter) {$U_C(\gamma_{p-1})$};
    
    \node[mblock] at (\xmfinal,\ytop) {$U_{M_1}(\beta_{p-1})$};
    \node[mblock] at (\xmfinal,\ymiddle) {$U_{M_2}(\beta_{p-1})$};
    \node at (\xmfinal,\ydots) {$\vdots$};
    \node[mblock] at (\xmfinal,\ybottom) {$U_{M_k}(\beta_{p-1})$};
    
    \draw[statecut] (\xfinalstate,\ybottom-0.6) -- (\xfinalstate,\ytop+0.6);
    
    \node[statelabel, below] at (\xfinalstate,\ybottom-0.6) {$|\Psi\rangle$};
    
    \node[measure] at (\xmeasure,\ytop) {$\mathcal{M}$};
    \node[measure] at (\xmeasure,\ymiddle) {$\mathcal{M}$};
    \node at (\xmeasure,\ydots) {$\vdots$};
    \node[measure] at (\xmeasure,\ybottom) {$\mathcal{M}$};

\end{tikzpicture}

\caption{Block-level LR-QAOA circuit. The initial state $|\Psi_0\rangle$ is prepared by applying $W_i$ to the register associated with each partition block $C'_i$. Each of the $p$ LR-QAOA layers consists of a cost unitary $U_C$ followed by partition-block mixers $U_{M_i}$. The mixers act independently on each register to preserve the partition constraint in \cref{const:partition}. The final state $|\Psi\rangle$ is measured in the computational basis.}
\label{fig:block_lr_qaoa}
\end{figure*}

\subsection{LR-QAOA}
For a valid cycle-edge partition $\mathcal{C}'$, let $n_i=|C'_i|$ denote the number of encoded edges in partition block $C'_i$. Because the cycle-edge encoding requires exactly one edge to be open in each partition block, the corresponding qubits are initialized in the $n_i$-qubit $W$ state
\begin{equation}
    |W_i\rangle = \frac{1}{\sqrt{n_i}} \sum_{a=1}^{n_i} |0\cdots1_a \cdots0\rangle.
\end{equation}
The complete initial state is therefore the tensor product of the partition-block $W$ states,
\begin{equation}
    |\Psi_0\rangle = \bigotimes_{i=1}^k |W_i\rangle.
\end{equation}
Under ideal execution, this construction ensures that all possible measurement outcomes correspond to configurations in $\mathcal{W}(\mathcal{C}')\subseteq \mathcal{S}_T$.

The block-level circuit is shown in \cref{fig:block_lr_qaoa}. Starting from $|\Psi_0\rangle$, LR-QAOA alternates between a cost unitary $U_C(\gamma_j)$ that encodes the approximate line-loss objective and a mixer unitary $U_{M_i}(\beta_j)$ that redistributes probability within $\mathcal{W}(\mathcal{C}')$. For a circuit containing $p$ layers, the final state is 
\begin{equation}
    |\Psi\rangle = \prod_{j=0}^{p-1} \left(\bigotimes_{i=1}^k U_{M_i}(\beta_j)\right)U_C(\gamma_j)|\Psi_0\rangle.
\end{equation}

The cost unitary is derived from a quadratic surrogate of the line-loss objective, 
\begin{equation} \label{eq:surrg}
f(\mathbf{x})= \sum_{1\leq i\leq k}  q_{a}^{(i)} x_{ia} + \sum_{1\leq i< \ell \le k}q^{(i\ell)}_{ab}x_{ia}x_{\ell b}, \end{equation}
where repeated edge indices $a$ and $b$ are implicitly summed over their respective partition blocks. Only quadratic interactions between different partition blocks are included here because the one-hot constraint makes within-block quadratic terms reducible to constant and linear terms. Therefore, no additional information can be obtained from such terms. The resulting quadratic model is mapped to a diagonal cost Hamiltonian $H_C$ by replacing each binary variable $x_{ia}$ with $(\mathbb{I}-Z_{ia})/2$, where $Z_{ia}$ denotes the Pauli-$Z$ operator acting on the $a$th qubit of block $C'_i$, such that 
\begin{equation}
    H_C|\mathbf{x}\rangle = f(\mathbf{x})|\mathbf{x}\rangle.
\end{equation}
Before constructing the cost unitary, $H_C$ is rescaled to obtain a normalized Hamiltonian  $\widetilde{H}_C$, following the procedure described in \cite{montanez2025toward}, to bound the rotation angles. The cost unitary at layer $j$ is then given by
\begin{equation}
    U_C(\gamma_j) = e^{-i\gamma_j\widetilde{H}_C}.
\end{equation}

To ensure that the evolution remains within $\mathcal{W}(\mathcal{C}')$, a constraint-preserving $XY$-ring mixer is used \cite{wang2020xy} for each partition block. For block $C'_i$, the mixer Hamiltonian is 
\begin{equation}
    H_{M_i} = \sum_{a=1}^{n_i} (X_{ia}X_{ia+1} + Y_{ia} Y_{ia+1}),
\end{equation}
where $X_{ia}$ and $Y_{ia}$ act on the $a$th qubit of $C'_i$, and $a+1$ is evaluated modulo $n_i$. The corresponding partition-block mixer unitary is 
\begin{equation}
    U_{M_i}(\beta_j) = e^{i\beta_jH_{M_i}}.
\end{equation}

Rather than optimizing $2p$ circuit parameters, we fix them according to the linear-ramp schedule 
\begin{equation} \beta_{j} = \left( 1 - \frac{j}{p} \right) \Delta_{\beta}, \qquad \gamma_{j} = \frac{j+1}{p} \Delta_{\gamma}, \end{equation}
where $j=0,\ldots,p-1$. Preliminary parameter sweeps on small test networks, including the 33-bus system, informed the choice of $\Delta_\beta=0.2$ and $\Delta_\gamma=1$. Both values were then held fixed for all reported experiments.

\subsection{Classical Evaluation}\label{sec:classical-eval}
We evaluate candidate configurations using the backward/forward-sweep (BFS) PF method \cite{shirmohammadi1988compensation}, which is well suited to radial distribution networks. The backward sweep accumulates the currents from the terminal buses toward the source, while the forward sweep updates the bus voltages from the source toward the terminal buses. These sweeps are repeated until convergence, after which the total active power loss of each configuration is calculated. The computational efficiency of this method enables lightweight classical pre- and post-processing. We use these PF evaluations to generate training data used to fit the quadratic surrogate in \cref{eq:surrg} using lasso regression \cite{tibshirani1996regression}. Additionally, PF is used to evaluate a selected subset of configurations sampled by the LR-QAOA and validate MISOCP incumbents. 

\subsection{Subspace Construction}\label{sec:subspace}
The construction of each LR-QAOA subspace begins from a reference spanning-tree configuration $T_t$, with $T_0$ corresponding to the base configuration of the network. Adding one edge to $T_t$ creates a unique fundamental cycle. Repeating this process for all $k$ open edges defines the fundamental-cycle basis $\mathcal{C}_t$, which is converted into a valid cycle-edge partition $\mathcal{C}_t'$ according to the conditions in \cref{sec:cycle-edge}. If representing the complete partition would exceed the available qubit budget, only a subset of its partition blocks and associated edges is encoded. We denote the number of edges retained in this restricted encoding by $n_q$, with one qubit assigned to each retained edge. The current open edge is retained in each selected block, while the open edges associated with unselected blocks remain fixed. Consequently, $T_t$ remains representable within the encoded subspace and is available to the quantum search if the subspace contains no improving configurations. 

A heuristic procedure is used to select the partition blocks and their associated edges under a restricted qubit budget. For each partition block, one trial configuration is generated by replacing its current open edge with a randomly selected alternative edge from the same block. The resulting configuration is evaluated using a single BFS PF calculation, and the partition blocks are ranked in ascending order of active power loss. The highest-ranked blocks are selected subject to the budget. The available qubits are then distributed as evenly as possible among the selected blocks, and the remaining candidate edges are chosen at approximately uniform intervals along their corresponding cycles. This spreads the encoded edges across the selected cycles and increases the topological coverage of the resulting subspace.

\subsection{Two-Stage Quantum-Classical Workflow}

\cref{fig:hybrid_workflow} summarizes the two-stage quantum-classical workflow. At iteration $t$, the current reference configuration $T_t$ is used to construct the cycle-edge subspace explored by LR-QAOA, as described in \cref{sec:subspace}. We sample the resulting circuit to produce a set of candidate network configurations. 

\begin{figure}[htbp]
\centering
\resizebox{\columnwidth}{!}{%
\begin{tikzpicture}[
    x=1cm,
    y=1cm,
    every node/.style={font=\small},
    box/.style={
        draw,
        rounded corners=2pt,
        minimum width=2.5cm,
        minimum height=1.0cm,
        align=center,
        fill=white
    },
    flow/.style={
        draw,
        -latex,
        thick
    }
]

\node[box] (tree) at (-4,0) { \textbf{Spanning Tree}\\ 
    $T_t$\\
};
\node[draw, circle, fill=white, inner sep=1.5pt, font=\scriptsize] at ([xshift=-2pt,yshift=2pt]tree.north west) {1.};

\node[box] (encoding) at (0,0) {\textbf{Cycle-Edge Partition}\\
$\mathcal{C}'_t$\\
};
\node[draw, circle, fill=white, inner sep=1.5pt, font=\scriptsize] at ([xshift=-2pt,yshift=2pt]encoding.north west) {2.};

\node[box] (qaoa) at (4,0) {\textbf{LR-QAOA}\\
    Distribution sampling\\
};
\node[draw, circle, fill=white, inner sep=1.5pt, font=\scriptsize] at ([xshift=-2pt,yshift=2pt]qaoa.north west) {3.};

\node[box] (samples) at (0,-2.0) {\textbf{Distribution} \\ \textbf{Filter}
};
\node[draw, circle, fill=white, inner sep=1.5pt, font=\scriptsize] at ([xshift=-2pt,yshift=2pt]samples.north west) {4.};

\node[box] (misocp) at (0,-4.0) {\textbf{MISOCP}\\ 
MIP start + variable hints
};
\node[draw, circle, fill=white, inner sep=1.5pt, font=\scriptsize] at ([xshift=-2pt,yshift=2pt]misocp.north west) {5.};

\draw[flow] (tree.east) -- (encoding.west);

\draw[flow] (encoding.east) -- (qaoa.west);

\draw[flow] (qaoa.south) -- (4,-2.0) -- (samples.east);

\draw[flow] (samples.west) -- (-4.0,-2.0) -- (-4.0,-1.0) -- (tree.south);

\node[font=\footnotesize, align=center] at (-2.4,-2.35){
    $T_{t+1}$\\
};

\draw[flow] (samples.south) -- (misocp.north);

\node[font=\footnotesize, anchor=west] at (0.2,-2.99) {Final iteration};

\end{tikzpicture}%
}

\caption{High level overview of the two-stage quantum-classical workflow. Starting from the current spanning-tree configuration $T_t$, a fundamental-cycle basis is constructed and converted into the cycle-edge partition $\mathcal{C}'_t$ which ultimately defines the LR-QAOA search space. The resulting sample distribution is filtered for feasibility and ranked using the quadratic surrogate, with the top-ranked candidates being evaluated by BFS PF to obtain $T_{t+1}$. After the final iteration the sampled information is passed to the MISOCP formulation as a MIP start and variable hints.}
\label{fig:hybrid_workflow}
\end{figure}

We filter the sampled set to remove outcomes that violate the DNR constraints. Although the encoding preserves topological feasibility under ideal execution, hardware noise can produce bitstrings that violate these constraints. We rank retained configurations using the quadratic surrogate in \cref{eq:surrg}, which provides a computationally inexpensive approximation of their relative line losses. Only the top-ranked candidates are then evaluated using the BFS PF method, reducing the number of PF calculations required at each iteration.

Among the candidates evaluated at iteration $t$, the configuration with the lowest line loss is selected as $T_{t+1}$. This configuration is then used to construct the cycle-edge subspace for the next LR-QAOA iteration, and the process is repeated for the prescribed number of iterations.

After the final LR-QAOA iteration, the lowest-loss feasible configuration evaluated from the final distribution supplies the line-status assignment as a MIP start for the MISOCP solver. The open edges in the top-ranked candidate configurations are additionally used to construct variable hints indicating preferred open states for the corresponding variables. These hints guide the solver's search without fixing the variables or restricting the feasible set.

\section{Results}\label{sec:results}
The proposed method is evaluated on six distribution networks ranging from 33 to 417 buses, with the corresponding MISOCP formulations containing between 37 and 473 binary line-status variables. For all networks, the LR-QAOA circuits use $n_q=29$ qubits and $p=2$ layers. We assess the effect of the hardware-derived sample distributions by comparing guided and unguided MISOCP solves using Gurobi. For each network, the two sets of runs use the same formulation, solver settings, random seeds, time limits, and computing environment; only the guided runs receive a MIP start and variable hints derived from the final LR-QAOA distribution. 

In \cref{sec:sampling-results}, the reported line-loss values are obtained using the BFS method discussed in \cref{sec:classical-eval}. In \cref{sec:misocp-results}, the plotted incumbent values are the MISOCP objective values reported by Gurobi. Because these values can differ slightly from BFS PF-evaluated losses, each incumbent counted as reaching the 1\% target is validated using BFS PF.

We first examine the evolution of the LR-QAOA sample distributions under ideal state-vector simulation and execution on trapped-ion quantum hardware (see \cref{app:quantum_hardware}). We then compare the guided and unguided MISOCP solvers using the hardware-derived distributions, measuring the time required to reach an incumbent within 1\% of the best-known solution. 

\subsection{Iterative LR-QAOA Sampling} \label{sec:sampling-results}
\begin{figure*}[htbp]
    \centering
    \includegraphics[width=\textwidth]{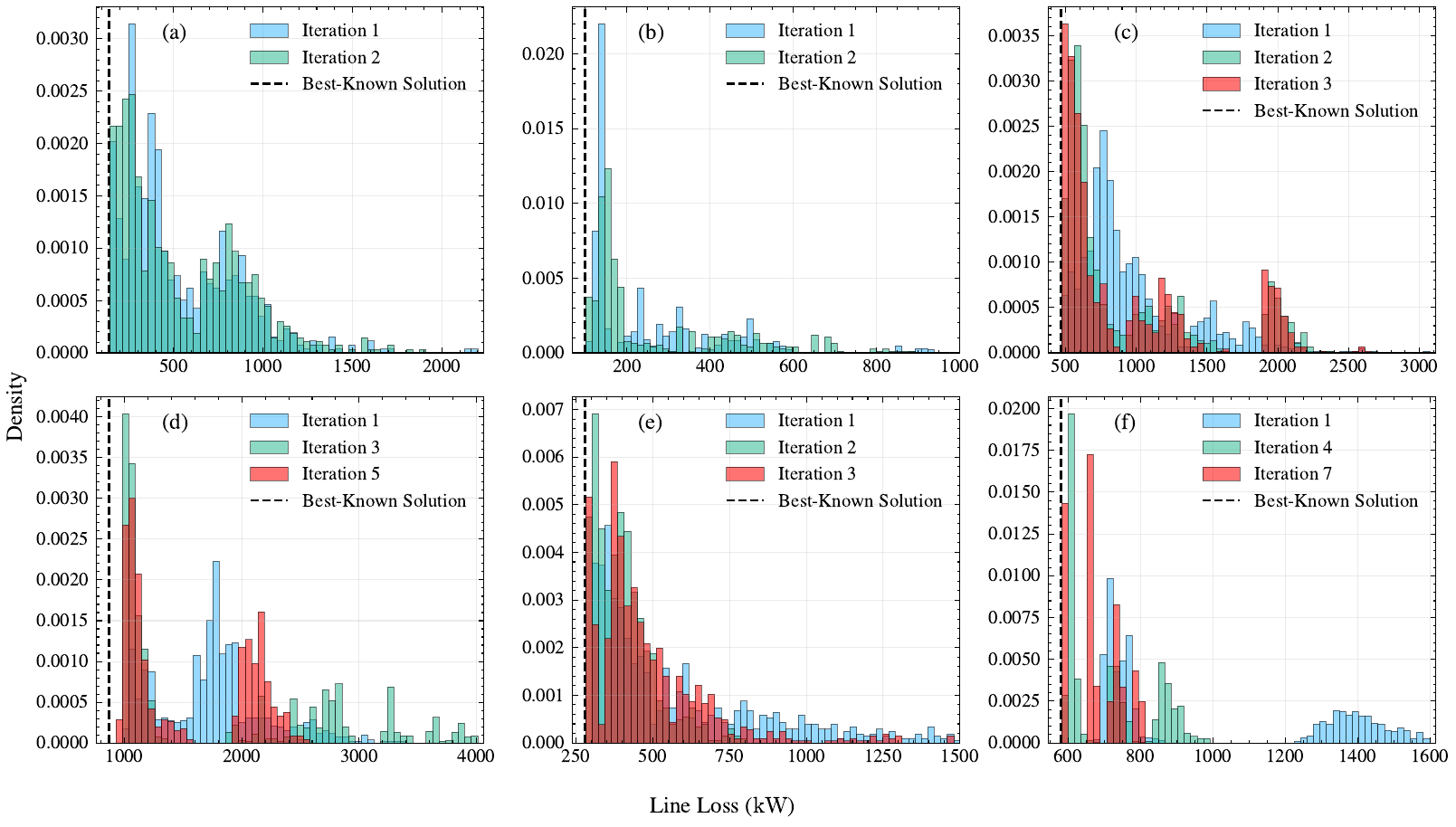}
    \caption{Line-loss distributions of DNR-feasible samples obtained from LR-QAOA simulation for the six distribution-network test systems: (a) 33-bus, (b) 69-bus, (c) 85-bus, (d) 118-bus, (e) 136-bus, and (f) 417-bus. Each panel shows the distributions obtained at selected iterations of the iterative LR-QAOA procedure, with the dashed vertical line indicating the best-known solution for the corresponding system. Each LR-QAOA circuit used $p=2$ layers and 1000 shots. Across the test networks, later iterations generally yield more samples in the lower-loss range, illustrating the progression of the iterative LR-QAOA procedure.
}
    \label{fig:simulator_dist_all}
\end{figure*}

Under the 29-qubit budget in these experiments, each encoded subspace includes all five partition blocks for the 33-bus and 69-bus networks, both of which have a cyclomatic number of $k=5$. The 85-bus, 118-bus, 136-bus, and 417-bus networks have cyclomatic numbers of $k=13, 15, 21, \text{ and } 59$, respectively. Eight partition blocks are selected for each of these systems. Eight blocks were chosen to balance the number of blocks allowed to vary within each iteration against the number of candidate edges retained per block. Including more blocks permits simultaneous changes across more blocks but leaves fewer qubits available to represent alternative edge choices within each block. The blocks and their retained edges are selected using the procedure described in \cref{sec:subspace}.

\Cref{fig:simulator_dist_all} and \cref{tab:sim} summarize the LR-QAOA samples obtained under ideal state-vector simulation. The figure shows the full BFS PF evaluated line-loss distributions at selected iterations, while the table reports the lowest line loss identified at each iteration. Across the six systems, the best sampled configurations generally improve as the reference configuration is updated and the encoded subspace is reconstructed. The 33-bus, 69-bus, 85-bus, 118-bus, 136-bus, and 417-bus systems all show decreasing best sampled losses over the displayed iterations. However, since the reference update and subspace-selection procedure are heuristic, monotonic improvement is not guaranteed.

\begin{table}[htbp]
    \centering
    \caption{Best BFS PF line loss ($\mathrm{kW}$) sampled at each LR-QAOA iteration under ideal state-vector simulation.}
    \renewcommand{\arraystretch}{1.15}
    \setlength{\tabcolsep}{3.5pt}
    \begin{tabular}{lccccccc}
    \hline
         Network & Iter. 1 & Iter. 2 & Iter. 3 & Iter. 4 & Iter. 5 & Iter. 6 & Iter. 7 \\
    \hline
        33-bus & 147.81 & 142.68 & -- & -- & --& -- & -- \\ 
        69-bus & 99.82 & 99.71 & -- & -- & --& -- & -- \\
        85-bus & 480.07 & 477.10 & 475.92 & -- & -- & -- & --\\ 
        118-bus & 1010.89 & 982.35 & 981.33 & 974.37 & 971.23 & -- & --\\ 
        136-bus & 302.53 & 286.50 & 285.50 & -- & -- & -- & --\\ 
        417-bus & 681.32 & 668.48 & 605.49 & 594.18 & 591.04 & 586.69 & 586.66 \\ 
    \end{tabular}
    \label{tab:sim}
\end{table}

The improvement in the best configuration is most pronounced for the larger instances. In the 417-bus network, the best sampled line loss decreases from 681.32 $\mathrm{kW}$ in the first iteration to $586.66~\mathrm{kW}$ by the seventh iteration, reaching within $1\%$ of the best-known solution \cite{harsh2023heuristic}. The 118-bus system also shows a substantial reduction over the iterations, decreasing from $1010.89$ to $971.23~\mathrm{kW}$. For the 85-bus and 136-bus systems, the final iterations reach $475.92~\mathrm{kW}$ and $285.50~\mathrm{kW}$, respectively. The smaller 33-bus and 69-bus networks show more modest absolute changes because their initial iterations already identify low line-loss configurations near the best-known solutions, leaving less room for large subsequent reductions in line loss.

The same iterative sampling procedure was then executed on the IonQ Forte Enterprise QPU to assess whether the observed behavior persists under hardware execution. Because hardware noise can produce measured bitstrings outside the encoded feasible space, the hardware distributions in \cref{fig:qpu_dist_all} and the best sampled losses in \cref{tab:qpu} are computed after retaining feasible configurations and evaluating them with BFS PF. Despite this additional filtering step, the best retained samples improve across iterations for the 69-bus, 85-bus, 118-bus, and 417-bus networks. The 417-bus hardware run decreases from $682.10~\mathrm{kW}$ in the first iteration to $589.56~\mathrm{kW}$ by the seventh iteration, reaching within approximately $1.4\%$ of the best-known solution. The 33-bus case is the main exception, with the best retained sample increasing slightly between the two iterations. 

\begin{figure*}[htbp]
    \centering
    \includegraphics[width=\textwidth]{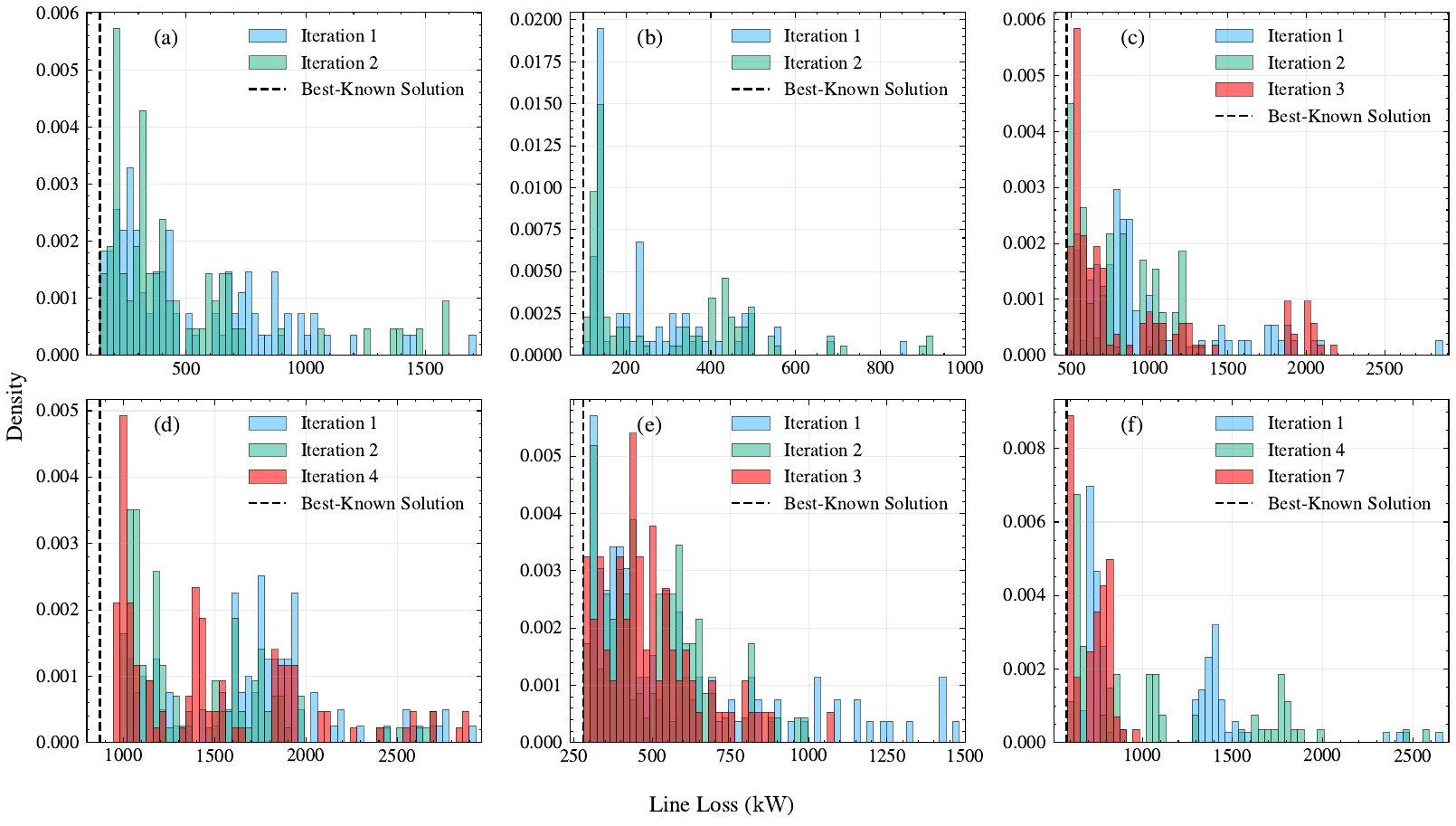}
    \caption{Line-loss distributions of DNR-feasible samples obtained from LR-QAOA execution on the IonQ Forte Enterprise QPU for the six distribution-network test systems: (a) 33-bus, (b) 69-bus, (c) 85-bus, (d) 118-bus, (e) 136-bus, and (f) 417-bus. Each panel shows the distributions obtained at selected iterations of the iterative LR-QAOA procedure, with the dashed vertical line indicating the best-known solution for the corresponding system. Each LR-QAOA circuit used $p=2$ layers and 1000 shots.
}
    \label{fig:qpu_dist_all}
\end{figure*}

\begin{table}[htbp]
    \centering
    \caption{Best BFS PF line loss ($\mathrm{kW}$) from retained feasible QPU samples at each LR-QAOA iteration.}
    \renewcommand{\arraystretch}{1.15}
    \setlength{\tabcolsep}{3.5pt}
    \begin{tabular}{lccccccc}
    \hline
         Network & Iter. 1 & Iter. 2 & Iter. 3 & Iter. 4 & Iter. 5 & Iter. 6 & Iter. 7 \\
    \hline
        33-bus & 153.49 & 155.80 & -- & -- & --& -- & -- \\ 
        69-bus & 111.67 & 103.98 & -- & -- & --& -- & -- \\
        85-bus & 509.80 & 477.73 & 475.79 & -- & -- & -- & --\\ 
        118-bus & 1018.39 & 992.18 & 978.21 & 959.00 & -- & -- & --\\ 
        136-bus & 302.72 & 295.14 & 285.35 & -- & -- & -- & --\\ 
        417-bus & 682.10 & 647.62 & 631.44 & 612.98 & 598.60 & 591.75 & 589.56 \\

    \end{tabular}
    \label{tab:qpu}
\end{table}

Compared with the ideal simulations, the QPU distributions exhibit changes in relative frequency and spread, reflecting the effect of hardware noise and the feasibility filtering step. Nevertheless, the retained hardware samples continue to include low-loss configurations across all six networks, with particularly strong improvement for the 417-bus case. Thus, the hardware experiment supports the use of the final QPU sample distribution as a source of problem-specific information for the guided MISOCP solve, rather than only as a qualitative demonstration of circuit execution.

\subsection{Distribution-Guided MISOCP} \label{sec:misocp-results}

We next evaluate whether the information contained in the final QPU LR-QAOA sample distribution improves the subsequent MISOCP solve. For each network, the guided solver is initialized using the lowest-loss feasible configuration identified from the final distribution and receives variable hints for the line-status variables based on the same distribution, while the unguided solver receives neither.

\Cref{fig:convergence-summary} compares the time required for the guided and unguided MISOCP solvers to reach an incumbent solution within $1\%$ of the best-known solution. Across all six networks, the guided solver has a lower median time to reach the target than the unguided solver, indicating that the LR-QAOA-derived MIP start and variable hints provide useful information for the classical search. The improvement is larger for larger systems, where the unguided solver requires substantially more time to identify a comparable incumbent. 

\begin{figure}[htbp]
    \centering
    \includegraphics[width=\linewidth]{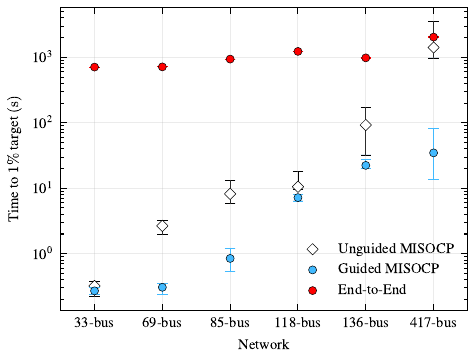}
    \caption{Time required to reach an incumbent solution within 1\% of the best-known solution for the six distribution-network test systems. Unguided MISOCP denotes the standalone solver and guided MISOCP uses the LR-QAOA-derived MIP start and variable hints. End-to-end runtime includes the LR-QAOA stage and guided MISOCP optimization. Markers indicate the median across 20 solver seeds, with error bars spanning the interquartile range.
}
    \label{fig:convergence-summary}
\end{figure}

The end-to-end runtimes in \cref{fig:convergence-summary} add QPU execution and classical pre- and post-processing times to those same guided solves. For the smaller and medium-sized networks, QPU execution dominates the total runtime, so the reduction in solver time does not produce an end-to-end speedup. For the 417-bus network, however, the end-to-end time approaches the unguided MISOCP time to target despite the added quantum-stage overhead. Thus, the guided solver requires less solver time than the unguided solves, with the total runtime gap narrowing for the largest network.

\Cref{fig:runtime-comparison} shows the same time-to-target metric as in \cref{fig:convergence-summary} as a function of the number of switchable lines, which corresponds to the number of binary line-status variables in the MISOCP formulation. The unguided MISOCP runtime increases sharply with the number of lines, reflecting the growing difficulty of the search. In contrast, the end-to-end runtime varies more gradually across the tested systems because the LR-QAOA stage requires the majority of the execution time and remains within a similar resource range across all networks. As a result, the gap between the unguided MISOCP runtime and the end-to-end runtime narrows for the largest network, suggesting a potential crossover regime in which the cost of obtaining quantum-derived guidance becomes comparable to the time saved in the classical search. At sufficiently difficult instances, solving the guided MISOCP is expected to dominate the runtime relative to running the quantum computer. This is hypothesized to yield exponential scaling for the end-to-end solution that is less favorable than what is depicted in Fig.~\ref{fig:runtime-comparison}, and testing the precise scaling at large sizes is an important future direction. The current comparison should therefore be interpreted as an empirical trend over the benchmark set rather than an asymptotic scaling claim, but it suggests that distribution-derived solver guidance becomes more relevant as the classical search space grows. 

\begin{figure}[htbp]
    \centering
\includegraphics[width=\linewidth]{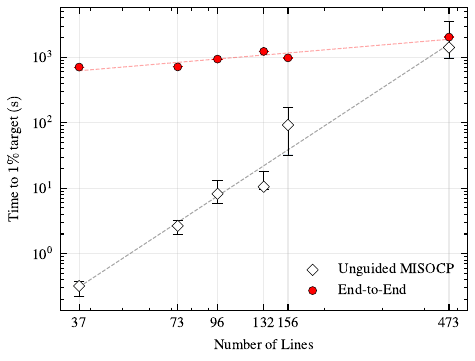}
    \caption{Time required to reach an incumbent solution within 1\% of the best-known solution as a function of the number of switchable lines in each distribution network. The number of lines corresponds to the number of binary line-status variables in the MISOCP formulation. Unguided MISOCP denotes the standalone solver, while end-to-end runtime includes both the LR-QAOA stage, pre- and post-processing, and the subsequent guided MISOCP optimization. Markers indicate the median across 20 solver seeds, with error bars spanning the interquartile range.
}
    \label{fig:runtime-comparison}
\end{figure}

\Cref{fig:417-gurobi} examines the 417-bus case in more detail by comparing the incumbent line-loss trajectories of the guided and unguided MISOCP solves. The guided solver begins with a substantially lower incumbent because the LR-QAOA sample distribution provided a high-quality feasible configuration before branch-and-bound begins in Gurobi. As a result, the median guided trajectory reaches the $1\%$ target much earlier than the unguided trajectory, which must first discover a comparable incumbent through the standalone MISOCP search. Over longer solve times, both methods continue improving toward the best-known solution, but the guided run provides high-quality incumbents earlier in the optimization process. 

\begin{figure}[htbp]
    \centering
    \includegraphics[width=\linewidth]{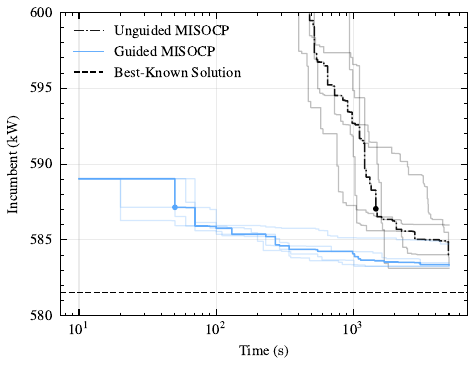}
    \caption{Incumbent line-loss trajectories for the guided and unguided MISOCP solvers on the 417-bus network. Faint lines show five individual runs for each solver. The solid blue and dash-dot black lines show the corresponding median trajectories for the guided and unguided solvers, respectively. Markers indicate the first median incumbent within 1\% of the best-known solution. The dashed horizontal line denotes the best-known solution of 581.57 $\mathrm{kW}$.
}
    \label{fig:417-gurobi}
\end{figure}

Taken together, these results show that the LR-QAOA-derived distribution provides useful information for the MISOCP solver. Across all six test systems, the guided runs reach high-quality incumbents earlier than the unguided runs, with the largest improvements occurring for the larger instances in which the standalone solve requires more time to identify comparable solutions. With the current quantum hardware, however, the end-to-end runtime remains dominated by the LR-QAOA sampling stage for several networks. The results therefore establish the effectiveness of QPU-derived solver guidance under present hardware constraints, while identifying quantum-execution overhead as the principal barrier to an end-to-end runtime advantage.

\section{Conclusion}\label{sec:conclusions}

In this work, we developed a two-stage quantum-classical framework for distribution network reconfiguration that combines iterative LR-QAOA sampling with a distribution-guided MISOCP solver. A central contribution is the proposed cycle-edge encoding, which restricts the encoded search space to spanning-tree configurations and therefore satisfies radiality and connectivity by construction while allowing the encoded subspace to be adapted according to a qubit budget. Iteratively reconstructing the encoded subspace around updated reference configurations enables different regions of the topologically feasible space to be explored without increasing the number of qubits.

Across six distribution networks, iterative LR-QAOA sampling identifies high-quality feasible configurations under both ideal simulation and trapped-ion quantum hardware execution. For the largest tested system, the 417-bus network, the ideal simulation identifies a configuration within 1\% of the best-known solution, while the QPU execution reaches within approximately 1.4\%. Using the hardware-derived distributions to guide the MISOCP solver reduced solver time to reach an incumbent within 1\% of the best-known solution across all six systems, with the largest reduction occurring for the larger instances. Although quantum-stage overhead prevents an end-to-end runtime advantage on the tested systems, the relative runtime gap narrows for the larger instances, motivating evaluation on larger and more computationally challenging networks.  

Future work should examine larger networks and qubit budgets to determine whether increased subspace coverage improves sample quality and end-to-end performance. Larger encoded subspaces could expose configurations unavailable to the 29-qubit circuits used here, potentially reducing the number of iterations required to identify low-loss solutions and, for some instances, the need for the subsequent MISOCP solve to reach the target. Testing this behavior on more difficult instances, where the unguided solver requires substantially longer to reach high-quality solutions, would help establish whether the additional hardware quantum resources translate into greater overall benefit. 

The subspace-construction procedure could also be improved by evaluating multiple alternative edges per block or by allowing feasible MISOCP incumbents to define reference configurations for subsequent LR-QAOA iterations, creating a more tightly coupled quantum-classical search. More broadly, understanding which properties of LR-QAOA distributions provide the most useful variable hints could lead to improved mechanisms for translating quantum samples into classical solver guidance and help identify regimes in which such guidance provides the greatest benefit for large-scale DNR.

\appendices 
\crefalias{section}{appendix}
\crefalias{subsection}{appendix}

\section{Proofs of the Cycle-Edge Encoding Properties}\label{app:cycle-edge} 

\subsection{Spanning-Tree-Subspace Reachability}\label{app:spanning-tree-reachability}
\begin{lemma}
\label{lem:tree-contained}
Let
\begin{equation}
    \mathcal{C}(T)=\{C_1,C_2,\ldots,C_k\}
\end{equation}
denote the set of fundamental cycles associated with a spanning tree $T$. Since every fundamental-cycle basis is a simple-cycle basis, it is an admissible cycle basis under the proposed encoding. Let $\mathcal{W}(\mathcal{C}'(T))$ denote the encoded subspace associated with
a valid partition $\mathcal{C}'(T)$. There exists a valid partition
\begin{equation}
    \mathcal{C}'(T)=\{C_1',C_2',\ldots,C_k'\}
\end{equation}
such that 
\begin{equation}
    T\in\mathcal{W}(\mathcal{C}'(T)).
\end{equation}
\end{lemma}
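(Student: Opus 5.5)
The plan is to take the fundamental cycles of $T$ in any order and check that the partition forced by \cref{constr:single-assignment} encodes $T$ itself. Let $e_1,\ldots,e_k$ be the non-tree edges (chords) of $G$ with respect to $T$. Index the fundamental cycles so that $C_i$ is the unique cycle in $T+e_i$. Then $C_i=\{e_i\}\cup P_i$, where $P_i\subseteq T$ is the tree path joining the endpoints of $e_i$.

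Define the partition $\mathcal{C}'(T)$ by \cref{constr:single-assignment}, $C_i'=C_i\setminus\bigcup_{j<i}C_j$. Conditions \cref{constr:subset} and \cref{constr:intersection} then hold immediately by construction. The remaining requirement for validity is that every block is nonempty, and this is the one point that needs an argument. Each chord $e_i$ lies in exactly one fundamental cycle, namely $C_i$, because $e_j\notin T$ for $j\neq i$ and $C_j$ contains only $e_j$ and tree edges. Hence $e_i\notin C_j$ for all $j<i$, so $e_i\in C_i'$ and each block is nonempty. The same observation shows the partition is valid for any indexing of $\mathcal{C}(T)$, so no special ordering is needed.

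Next, exhibit the assignment. In block $C_i'$, set $x_{ia}=1$ for the position $a$ corresponding to the chord $e_i$ and $x_{ib}=0$ for all other positions. Each block then has exactly one open edge, so \cref{const:partition} is satisfied. The configuration produced is obtained by opening exactly $\{e_1,\ldots,e_k\}$. Since the cyclomatic number satisfies $k=|E|-|V|+1$ and $T$ has $|V|-1$ edges, this set is exactly $E\setminus T$. The resulting graph is therefore $T$, and by \cref{thm:spanning-tree} it is indeed a spanning tree. Hence $T\in\mathcal{W}(\mathcal{C}'(T))$.

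I do not expect a serious obstacle. The only step requiring care is nonemptiness of the blocks, that is, making sure no chord is absorbed into an earlier block. This is settled by the fact that the chords are private to their own fundamental cycles. I would also note that the same lemma yields \cref{thm:reachability}: taking $\mathcal{P}=\{\mathcal{C}'(T_1),\ldots,\mathcal{C}'(T_t)\}$, which is finite since $G$ has finitely many spanning trees, covers $\mathcal{T}(G)$. The reverse inclusion $\mathcal{W}(\mathcal{P})\subseteq\mathcal{T}(G)$ follows from \cref{thm:spanning-tree}.
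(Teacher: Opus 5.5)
Your proof is correct and follows the paper's argument: each chord lies only in its own fundamental cycle $C_i$, so it lands in $C_i'$ for any indexing, and selecting the chord in every block opens exactly $E\setminus T$ and recovers $T$. Your explicit check that every block is nonempty is a useful addition, since the paper's proof leaves it implicit.
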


\begin{proof}
Let
\begin{equation}
    O(T)=E\setminus T=\{o_1,o_2,\ldots,o_k\}
\end{equation}
denote the non-tree edges of $T$. Each $o_i$ belongs exclusively to its
corresponding fundamental cycle $C_i$. Therefore, under any valid partition
of the cycle-edge sets,
$
    o_i\in C_i',\text{for } i=1,\ldots,k.
$
Selecting $o_i$ from each partition block opens exactly the non-tree edges
of $T$. Hence, the resulting configuration is $T$, and therefore
\begin{equation}
    T\in\mathcal{W}(\mathcal{C}'(T)).
\end{equation}
\end{proof}

\par\noindent 
\textbf{\cref{thm:reachability}} (restated).
\textit{Let $\mathcal{T}(G)=\{T_1,T_2,\ldots,T_t\}$ denote the set of all spanning trees of a connected
graph $G$. There exists a finite collection of valid partitions
\begin{align}
    &\mathcal{P} = \{\mathcal{C}'_1,\mathcal{C}'_2,\ldots,\mathcal{C}'_m\}\\
    \text{s.t.}\; &
    \mathcal{W}(\mathcal{P}) := \bigcup_{i=1}^{m} \mathcal{W}(\mathcal{C}'_i) = \mathcal{T}(G).
\end{align}}

\begin{proof}[Proof of \cref{thm:reachability}]
By \cref{lem:tree-contained}, for every $T_i\in\mathcal{T}(G)$, there exists a valid partition $\mathcal{C}'(T_i)$ such that
\[
T_i\in\mathcal{W}(\mathcal{C}'(T_i)).
\]
Therefore, letting
\[ 
\mathcal{P} = \{\mathcal{C}'(T):T\in\mathcal{T}(G)\}, 
\]
it follows that
\begin{equation}
    \mathcal{W}(\mathcal{P}) = \bigcup_{\mathcal{C}'\in\mathcal{P}} \mathcal{W}(\mathcal{C}') = \mathcal{T}(G).
\end{equation}
\end{proof}

\section{Trapped-ion quantum hardware}\label{app:quantum_hardware} 

The quantum experiments in this work were performed on IonQ's Forte Enterprise quantum processing unit (QPU)~\cite{Chen2024-ik}. The system uses a register of 36 trapped $^{171}\mathrm{Yb}^{+}$ ions, with qubits encoded in hyperfine levels of the electronic ground state. The ions are generated through laser ablation and photoionization and confined in a surface-electrode linear Paul trap within an integrated vacuum system.

Qubit operations are driven by pulsed $\SI{355}{nm}$ laser fields that induce two-photon Raman transitions, providing arbitrary single-qubit rotations and native two-qubit $R_{ZZ}$ entangling gates. Individual-ion addressing is achieved using acousto-optic deflectors (AODs), which independently steer the control beams and reduce addressing and alignment errors across the ion chain~\cite{Kim2008-uj,Pogorelov2021-pu}. Automated calibration routines further support stable device operation throughout the experiments. Typical gate durations are approximately $\SI{130}{\micro s}$ for single-qubit operations and $\SI{950}{\micro s}$ for two-qubit entangling gates.

\section*{Acknowledgments}
This paper was prepared by EPB Quantum using Federal funds under award GONANB24D219-1 from National Institute of Standards and technology (NIST), U.S. Department of Commerce. The statements, findings, conclusions, and recommendations are those of the authors. This material is based upon work supported by the U.S. Department
of Energy, Office of Science, Office of Advanced Scientific Computing
Research under Award Number 89243024SSC000129 and under field work
proposal ERKJ445. N.A. and C.K. thank Noah Crum, Tanner Rase, Pete Pritchard, and Mason Blanchard of EPB for their insightful discussions that informed the theoretical framework, and David Nordy of EPB for guidance in power-flow analysis.  P.C.L. and S.R. thank Daniel Claudino and Teja Kuruganti for assistance in early development of the problem formulation.

\bibliographystyle{IEEEtran}
\bibliography{ref}
\end{document}